\newtheorem{theorem}{Theorem}[section]
\newtheorem{lemma}[theorem]{Lemma}
\newtheorem{definition}{Definition}
\newcommand{\ie}{\emph{i.e. }}
\newcommand{\etal}{\textit{et al.}}
\begin{document}

\title{An Algebraic Watchdog for Wireless Network Coding}
\author{\authorblockN{MinJi Kim\authorrefmark{1}, Muriel M\'{e}dard\authorrefmark{1}, Jo\~{a}o Barros\authorrefmark{2}, and Ralf K\"{o}tter\authorrefmark{3}}
\authorblockA{\\ \authorrefmark{1}Research Laboratory of Electronics\\
Massachusetts Institute of Technology\\
Cambridge, MA 02139, USA\\
Email: \{minjikim, medard\}@mit.edu}\vspace*{0.2cm}
\authorblockA{\authorrefmark{2}Instituto de Telecommunica\c{c}\~{o}es\\
Departamento de Engenharia Electrot\'{e}cnica e de Computadores\\
Faculdade de Engenharia da Universidade do Porto, Portugal\\
Email: jbarros@fe.up.pt}\vspace*{0.2cm}
\authorblockA{\authorrefmark{3}Institute for Communications Engineering\\
Technische Universit\"{a}t M\"{u}nchen, Munich, Germany}\vspace*{-0.5cm}
\thanks{This material is based upon work under a subcontract \#069145 issued by BAE Systems National Security Solutions, Inc. and supported by the DARPA and the Space and Naval Warfare System Center, San Diego under Contract No. N66001-08-C-2013.}
\thanks{$^\ddag$Ralf K\"{o}tter passed away earlier this year.}
}

\maketitle

\begin{abstract}
In this paper, we propose a scheme, called the \emph{algebraic watchdog} for wireless network coding, in which nodes can detect malicious behaviors probabilistically, police their downstream neighbors locally using overheard messages, and, thus, provide a secure global \emph{self-checking network}. Unlike traditional Byzantine detection protocols which are \emph{receiver-based}, this protocol gives the senders an active role in checking the node downstream. This work is inspired by Marti \etal's \emph{watchdog-pathrater}, which attempts to detect and mitigate the effects of routing misbehavior.

As the first building block of a such system, we focus on a two-hop network. We present a graphical model to understand the inference process nodes execute to police their downstream neighbors; as well as to compute, analyze, and approximate the probabilities of misdetection and false detection. In addition, we present an algebraic analysis of the performance using an hypothesis testing framework, that provides exact formulae for probabilities of false detection and misdetection.
\end{abstract}
\IEEEpeerreviewmaketitle

\section{Introduction}
%
%


There have been numerous contributions to secure wireless networks, including key management, secure routing, Byzantine detection, and various protocol designs (for a general survey on this topic, see \cite{hubaux}). We focus on Byzantine detection. The traditional approach is \emph{receiver-based} -- \ie the receiver of the corrupted data detects the presence of an upstream adversary. However, this detection may come too late as the adversary is partially successful in disrupting the network (even if it is detected). It has wasted network bandwidth, while the source is still unaware of the need for retransmission.

Reference \cite{marti} introduces a protocol for routing wireless networks, called the \emph{watchdog and pathrater}, in which upstream nodes police their downstream neighbors using \emph{promiscuous monitoring}. Promiscuous monitoring means that if a node $A$ is within range of a node $B$, it can overhear communication to and from $B$ even if those communication do not directly involve $A$. This scheme successfully detects adversaries and removes misbehaving nodes from the network by dynamically adjusting the routing paths. However, the protocol requires a significant overhead (12\% to 24\%) owing to increased control traffic and numerous cryptographic messages.

Our goal is to design/analyze a watchdog-inspired protocol for wireless networks using network coding. Network coding \cite{ahlswede}\cite{algebraic} is advantageous as it not only increases throughput and robustness against failures and erasures but also it is resilient in dynamic/unstable networks where state information may change rapidly or may be hard to obtain. Taking advantage of the wireless setting, we propose a scheme for coded networks, in which nodes can verify probabilistically, and police their neighbors locally using promiscuous monitoring. Our ultimate goal is a robust \emph{self-checking network}. In this paper, we present the first building block of a such system, and analyze the algebraic watchdog protocol for a two-hop network.

The paper is organized as follows. In Section \ref{sec:background}, we present the background and related material. In Section \ref{sec:problemstatement}, we introduce our problem statement and network model. In Section \ref{sec:example}, we analyze the protocol for a simple two-hop network, first algebraically in Section \ref{sec:algebraicapproach} and then graphically in Section \ref{sec:graphicalmodel}. In Section \ref{sec:conclusion}, we summarize our contribution and discuss some future work.

\section{Background and Definitions}\label{sec:background}
\subsection{Secure Network Coding}\label{sec:securenc}
Network coding, first introduced in \cite{ahlswede}, allows algebraic mixing of information in the intermediate nodes. This mixing has been shown to have numerous performance benefits. It is known that network coding maximizes throughput \cite{ahlswede}, as well as robustness against failures \cite{algebraic} and erasures \cite{reliable}. However, a major concern for network coding system is its vulnerability to Byzantine adversaries. A single corrupted packet generated by a Byzantine adversary can contaminate all the information to a destination, and propagate to other destinations quickly. For example, in random linear network coding \cite{reliable}, one corrupted packet in a generation (\ie a fixed set of packets) can prevent a receiver from decoding any data from that generation even if all the other packets it has received are valid.

There are several papers that attempt to address this problem. One approach is to correct the errors injected by the Byzantine adversaries using \emph{network error correction} \cite{errorcorrection}. They bound the maximum achievable rate in an adversarial setting, and generalizes the Hamming, Gilbert-Varshamov, and Singleton bounds. Jaggi \etal \cite{resilient} propose a distributed, rate-optimal, network coding scheme for multicast network that is resilient in the presence of Byzantine adversaries for sufficiently large field and packet size. Reference \cite{subspace} generalizes \cite{resilient} to provide correction guarantees against adversarial errors for any given field and packet size. In \cite{milcom}, Kim \etal compare the cost and benefit associated with these Byzantine detection schemes in terms of transmitted bits by allowing nodes to employ the detection schemes to drop polluted data.


\subsection{Secure Routing Protocol: Watchdog and Pathrater}

The problem of securing networks in the presence of Byzantine adversaries has been studied extensively, e.g. \cite{perlman},\cite{liskov},\cite{Lamport}. The \emph{watchdog and pathrater} \cite{marti} are two extensions to the Dynamic Source Routing \cite{dsr} protocol that attempt to detect and mitigate the effects of routing misbehavior. The watchdog detects misbehavior based on promiscuous monitoring of the transmissions of the downstream node to confirm if this relay correctly forwards the packets it receives. If a node bound to forward a packet fails to do so after a certain period of time, the watchdog increments a failure rating for that node and a node is deemed to be misbehaving when this failure rating exceeds a certain threshold. The pathrater then uses the gathered information to determine the best possible routes by avoiding misbehaving nodes. This mechanism, which does not punish these nodes (it actually relieves them from forwarding operations), provides an increase in the throughput of networks with misbehaving nodes.

\subsection{Hypothesis Testing}\label{sec:hypothesis}

Hypothesis testing is a method of deciding which of the two hypotheses, denoted $H_0$ and $H_1$, is true, given an observation denoted as $U$. In this paper, $H_0$ is the hypothesis that $R$ is well-behaving, $H_1$ is that $R$ is malicious, and $U$ is the information gathered from overhearing. The observation $U$ is distributed differently depending whether $H_0$ or $H_1$ is true, and these distributions are denoted as $P_{U|H_0}$ and $P_{U|H_1}$ respectively.

An algorithm is used to choose between the hypotheses given the observation $U$. There are two types of error associated with the decision process:
\begin{itemize}
\item{\it{Type 1 error, False detection}}: Accepting $H_1$ when $H_0$ is true (\ie considering a well-behaving $R$ to be malicious), and the probability of this event is denoted $\gamma$.
\item{\it{Type 2 error, Misdetection}}: Accepting $H_0$ when $H_1$ is true (\ie considering a malicious $R$ to be well-behaving), and the probability of this event is denoted $\beta$.
\end{itemize}
The Neyman-Pearson theorem gives the optimal decision rule that given the maximal tolerable $\beta$, we can minimize $\gamma$ by accepting hypothesis $H_0$ if and only if $\log \frac{P_{U|H_0}}{P_{U|H_1}} \geq t$ for some threshold $t$ dependant on $\gamma$. For more thorough survey on hypothesis testing in the context of authentication, see \cite{hypothesis}.

\subsection{Notations and definitions}

We shall use elements from a field, and their bit-representation. To avoid confusion, we use the same character in italic font (\ie $x$) for the field element, and in bold font (\ie $\mathbf{x}$) for the bit-representation. We use underscore bold font (\ie $\mathbf{\underline{x}}$) for vectors. For arithmetic operations in the field, we shall use the conventional notation (\ie $+, -, \cdot
$). For bit-operation, we shall use $\oplus$ for addition, and $\otimes$ for multiplication.

We also require polynomial hash functions defined as follows (for a more detailed discussion on this topic, see \cite{hash}).\vspace*{-0.45cm}
\begin{definition}[\textit{\textbf{Polynomial hash functions}}] For a finite field $\mathbf{F}$ and $d \geq 1$, the class of polynomial hash functions on $\mathbf{F}$ is defined as follows:
\[
\mathcal{H}^d(\mathbf{F}) = \{h_a | a = \langle a_0, ..., a_d \rangle \in \mathbf{F}^{d+1}\},
\]
where $h_a(x) = \sum_{i=0}^d a_i x^i$ for $x\in \mathbf{F}$.
\end{definition}

\section{Problem Statement}\label{sec:problemstatement}
We model a wireless network with a hypergraph $G = (V, E_1, E_2)$, where $V$ is the set of the nodes in the network, $E_1$ is the set of hyperedges representing the connectivity (wireless links), and $E_2$ is the set of hyperedges representing the interference. We use the hypergraph to capture the broadcast nature of the wireless medium. If $(v_1, v_2) \in E_1$ and $(v_1,v_3)\in E_2$ where $v_1, v_2, v_3\in V$, then there is an intended transmission from $v_1$ to $v_2$, and $v_3$ can overhear this transmission (possibly incorrectly). There is a certain transition probability associated with the interference channels known to the nodes, and we model them with binary channels.

A node $v_i\in V$ transmits coded information $x_i$ by transmitting a packet $\mathbf{\underline{p_i}}$, where $\mathbf{\underline{p_i}}=[\mathbf{a_i}, \mathbf{h_{I_i}}, \mathbf{h_{x_i}}, \mathbf{x_i}]$ is a $\{0,1\}$-vector. A valid packet $\mathbf{\underline{p_i}}$ is defined as below:
\begin{itemize}
\item $\mathbf{a_i}$ corresponds to the coding coefficients $\alpha_j$, $j \in I_i$, where $I_i \subseteq V$ is the set of nodes adjacent to $v_i$ in $E_1$,
\item $\mathbf{h_{I_i}}$ corresponds to the hash $h(x_j)$, $v_j \in I_i$ where $h(\cdot)$ is a $h$-bit polynomial hash function,
\item $\mathbf{h_{x_i}}$ corresponds to the polynomial hash $h(x_i)$,
\item $\mathbf{x_i}$ is the $n$-bit representation of $x_i = \sum_{j \in I} \alpha_j x_j$.
\end{itemize}

We assume that the hash function used, $h(\cdot)$, is known to all nodes, including the adversary. In addition, we assume that $\mathbf{a_i}$, $\mathbf{h_{I_i}}$ and $\mathbf{h_{x_i}}$ are part of the header information, and are sufficiently coded to allow the nodes to correctly receive them even under noisy channel conditions. Therefore, if a node overhears the transmission of $\mathbf{\underline{p_i}}$, it may not be able to correctly receive $x_i$, but it receives $\alpha_j$ and $h(x_j)$ for $v_j \in I_i$, and $h(x_i)$. Protecting the header sufficiently will of course induce some overhead, but the assumption remains a reasonable one to make. First, the header is smaller than the message itself. Second, even in the routing case, the header and the state information must to be coded sufficiently. Third, the hashes $\mathbf{h_{I_i}}$ and $\mathbf{h_{x_i}}$ are contained within one hop -- \ie a node that receives $\mathbf{\underline{p_i}}=[\mathbf{a_i}, \mathbf{h_{I_i}}, \mathbf{h_{x_i}}, \mathbf{x_i}]$ does not need to repeat $\mathbf{h_{I_i}}$, thus sending only $\mathbf{h_{x_i}}$. Therefore, the overhead associated with the hashes is proportional to the in-degree of a node, and does not accumulate with the routing path length.

\begin{figure}[h!]
\begin{center}
\includegraphics[width=0.26\textwidth]{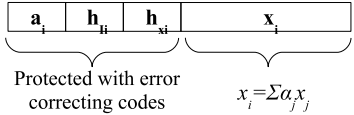}
\end{center}
\vspace*{-0.5cm}
\caption{A valid packet $\mathbf{\underline{p_i}}$ sent by well-behaving $R$}\label{fig:packet}
\end{figure}

Assume that $v_i$ transmits $\mathbf{\underline{p_i}}=[\mathbf{a_i}, \mathbf{h_{I_i}}, \mathbf{h_{x_i}}, \mathbf{\hat{x}_i}]$, where $\mathbf{\hat{x}_i} = \mathbf{x_i} \oplus \mathbf{e}$,  $\mathbf{e} \in \{0,1\}^n$. If $v_i$ is misbehaving, then $\mathbf{e} \ne 0$. It is important to note that the adversary can choose any $\mathbf{e}$; thus, the adversary can choose the message $\mathbf{\hat{x}_i}$. Our goal is to detect with high probability when $\mathbf{e} \ne 0$. Note that even if $|\mathbf{e}|$ is small (\ie the hamming distance between $\mathbf{\hat{x}_i}$ and $\mathbf{x_i}$ is small), the algebraic interpretation of $\mathbf{\hat{x}_i}$ and $\mathbf{x_i}$ may differ significantly. For example, consider $n=4$, $\mathbf{\hat{x}_i} =[0000]$, and $\mathbf{x_i} = [1000]$. Then, $\mathbf{e} = [1000]$ and $|\mathbf{e}| = 1$. However, the algebraic interpretation of $\mathbf{\hat{x}_i}$ and $\mathbf{x_i}$ are 0 and 8. Thus, even a single bit flip can alter the message very significantly.

%
%
%
%

Our goal is to explore an approach to detect and prevent malicious behaviors in wireless networks using network coding. The scheme takes advantage of the wireless setting, where neighbors can overhear others' transmissions albeit with some noise, to verify probabilistically that the next node in the path is behaving given the overheard transmissions.

\section{Two-hop network}\label{sec:example}

Consider a network (or a small neighborhood of nodes in a larger network) with nodes $v_1, v_2, ... v_m,$ $v_{m+1}$, $v_{m+2}$. Nodes $v_i$, $i\in [1,m]$, want to transmit $x_i$ to $v_{m+2}$ via $v_{m+1}$. A single node $v_i$, $i\in [1,m]$, cannot check whether $v_{m+1}$ is misbehaving or not even if $v_i$ overhears $\mathbf{x_{m+1}}$, since without any information about $x_j$ for $j \in [1,m]$, ${x}_{m+1}$ is completely random to $v_i$. On the other hand, if $v_i$ knows $x_{m+1}$ and $x_j$ for all $j \in [1, m]$, then $v_i$ can verify that $v_{m+1}$ is behaving with certainty; however, this requires at least $m-1$ additional reliable transmissions to $v_i$.

\begin{figure}[h!]
\begin{center}
\includegraphics[width=0.38\textwidth]{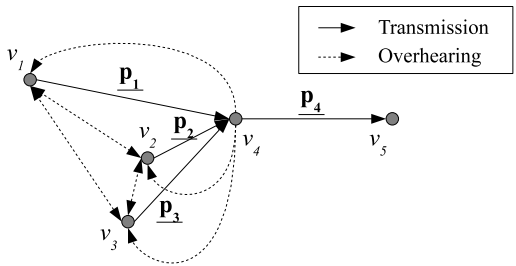}
\end{center}
\vspace*{-0.5cm}
\caption{A wireless network with $m=3$.}\label{fig:general}
\end{figure}

Therefore, we take advantage of the wireless setting, in which nodes can overhear their neighbors' transmissions. In Figure \ref{fig:general},  we use the solid lines to represent the intended channels $E_1$, and dotted lines for the interference channels $E_2$ which we model with binary channels as mentioned in Section \ref{sec:problemstatement}. Each node checks whether its neighbors are transmitting values that are consistent with the gathered information. If a node detects that its neighbor is misbehaving, then it can alert other nodes in the network and isolate the misbehaving node.

\begin{figure}[tbp]
\begin{center}
\includegraphics[width=0.39\textwidth]{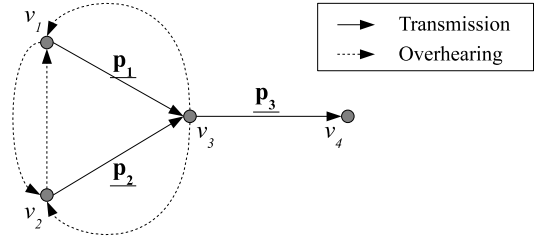}
\end{center}
\vspace*{-0.5cm}
\caption{A wireless network with $m=2$.}\label{fig:2sources}
\end{figure}
%


As outlined in Section \ref{sec:hypothesis}, we denote the hypothesis that $R$ is well-behaving by $H_0$, and $H_1$ corresponds to that of a malicious $R$. In the next subsections, we shall use an example with $m=2$, as shown Figure \ref{fig:2sources}, to introduce the graphical model which explains how a node $v_i$ checks its neighbor's behavior. Then, we use an algebraic approach to analyze/compute $\gamma$ and $\beta$ for this example network.

%

\subsection{Graphical model approach}\label{sec:graphicalmodel}

In this section, we present a graphical approach to model this problem systematically, and to explain how a node may check its neighbors. This approach may be advantageous as it lends easily to already existing graphical model algorithms as well as some approximation algorithms.

We shall consider the problem from $v_1$'s perspective. As shown in Figure \ref{fig:trellis}, the graphical model has four layers: Layer 1 contains $2^{n+h}$ vertices, each representing a bit-representation of $[\mathbf{\tilde{x}_2, h(x_2)}]$; Layer 2 contains $2^n$ vertices, each representing a bit-representation of $\mathbf{x_2}$; Layer 3 contains $2^n$ vertices corresponding to $\mathbf{x_3}$; and Layer 4 contains $2^{n+h}$ vertices corresponding to  $[\mathbf{\tilde{x}_3, h(x_3)}]$. Edges exist between adjacent layers as follows:
\begin{itemize}
\item{\it{Layer 1 to Layer 2:}} An edge exists between a vertex $[\mathbf{v,u}]$ in Layer 1 and a vertex $\mathbf{w}$ in Layer 2 if and only if $\mathbf{h(w) = u}$. The edge weight is normalized such that the total weight of edges leaving $[\mathbf{v, u}]$ is 1, and the weight is proportional to:
    \vspace*{-.08cm}
     \[\mathbf{P}(\mathbf{v}| \text{ Channel statistics and } \mathbf{w} \text{ is the original message}),\] \vspace*{-.08cm}which is the probability that the inference channel outputs message $\mathbf{v}$ given an input message $\mathbf{w}$.
\item{\it{Layer 2 to Layer 3:}} The edges represent a permutation. A vertex $\mathbf{v}$ in Layer 2 is adjacent to a vertex $\mathbf{w}$ in Layer 3 if and only if $w = c+\alpha_2 v$, where $c = \alpha_1 x_1$ is a constant, $\mathbf{v}$ and $\mathbf{w}$ are the bit-representation of $v$ and $w$, respectively. The edge weights are all 1.
\item{\it{Layer 3 to Layer 4:}}  An edge exists between a vertex $\mathbf{v}$ in Layer 3 and a vertex $[\mathbf{w,u}]$ in Layer 4 if and only if $\mathbf{h(v) = u}$. The edge weight is normalized such that the total weight leaving $\mathbf{v}$ is 1, and is proportional to:
        \vspace*{-.08cm}
     \[\mathbf{P}(\mathbf{w}| \text{ Channel statistics and } \mathbf{v} \text{ is the original message}).\]
\end{itemize}

\begin{figure}[tbp]
\begin{center}
\includegraphics[width=0.48\textwidth]{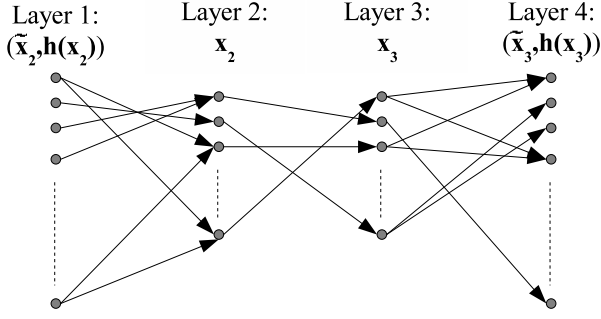}
\end{center}
\vspace*{-0.5cm}
\caption{A graphical model from $v_1$'s perspective}\label{fig:trellis}
\end{figure}

Node $v_1$ overhears the transmissions from $v_2$ to $v_3$ and from $v_3$ to $v_4$; therefore, it receives $[\mathbf{\tilde{x}_2, h(x_2)}]$ and $[\mathbf{\tilde{x}_3, h(x_3)}]$, corresponding to the \emph{starting point} in Layer 1 and the \emph{destination point} in Layer 4 respectively. By computing the sum of the product of the weights of all possible paths between the starting and the destination points, $v_1$ computes the probability that $v_3$ is consistent with the information gathered.

This graphical model illustrates sequentially and visually the inference process $v_1$ executes. In addition, the graphical approach may be extend to larger networks. Cascading multiple copies of the graphical model may allow us to systematically model larger networks with multiple hops as well as $m\geq 3$. (Note that when $m$ increases, the graphical model changes into a family of graphs; while when $n$ increases, the size of each Layer increases.) Furthermore, by using approximation algorithms and pruning algorithms, we may be able to simplify the computation as well as the structure of the graph.

%
%
%

\subsection{Algebraic approach}\label{sec:algebraicapproach}

Consider $v_1$. By assumption, $v_1$ correctly receives $\mathbf{a_2}$,  $\mathbf{a_3}$, $\mathbf{h_{I_2}}$, $\mathbf{h_{I_3}}$, $\mathbf{h_{x_2}}$, and $\mathbf{h_{x_3}}$. In addition, $v_1$ receives $\mathbf{\tilde{x}_2 = x_2 + e'}$ and $\mathbf{\tilde{x}_3 = x_3 + e''}$, where $\mathbf{e'}$ and $\mathbf{e''}$ are outcomes of the interference channels. Given $\mathbf{\tilde{x}_j}$ for $j = \{2,3\}$ and the transition probabilities, $v_1$ computes $r_{j\rightarrow 1}$ such that the sum of the probability that the interference channel from $v_j$ and $v_1$ outputs $\mathbf{\tilde{x}_j}$ given $\mathbf{x} \in B(\mathbf{\tilde{x}_j}, r_{j\rightarrow 1})$ is greater or equal to $1-\epsilon$ where $\epsilon$ is a constant, and $B(\mathbf{x}, r)$ is a $n$-dimensional ball of radius $r$ centered at $\mathbf{x}$. Now, $v_1$ computes $\tilde{X}_j = \{\mathbf{x}\ |\ h(x) = h(x_j)\} \cap B(\mathbf{\tilde{x}_j}, r_{j\rightarrow 1})$ for $j = \{2, 3\}$. Then, $v_1$ computes $\alpha_1 x_1 + \alpha_2 \hat{x}$ for all $\mathbf{\hat{x}} \in \tilde{X}_2$. Then, $v_1$ intersects $\tilde{X}_3$ and the computed $\alpha_1 x_1 + \alpha_2 \hat{x}$'s. If the intersection is empty, then $v_1$ claims that $R$ is misbehaving.

We explain the inference process described above using the graphical model introduced in Section \ref{sec:graphicalmodel}. The set $\{\mathbf{x}\ |\ h(x) = h(x_2)\}$ represents the Layer 2 vertices reachable from the starting point ($[\mathbf{\tilde{x}_2, h(x_2)}]$ in Layer 1), and $\tilde{X}_2$ is a subset of the reachable Layer 2 vertices such that the total edge weight (which corresponds to the transition probability) from the starting point is greater than $1-\epsilon$. Then, computing $\alpha_1 x_1 + \alpha_2 \hat{x}$ represents the permutation from Layers 2 to 3. Finally, the intersection with $\tilde{X}_3$ represents finding a set of Layer 3 vertices such that they are adjacent to the destination point ($[\mathbf{\tilde{x}_3, h(x_3)}]$ in Layer 4) and their total transition probability to the destination point is greater than $1-\epsilon$.




Note that a malicious $v_3$ would not inject errors in $\mathbf{h_{x_3}}$ only, because the destination $v_4$ can easily verify if $\mathbf{h_{x_3}}$ is equal to $h(\mathbf{x_3})$. Therefore, $\mathbf{h_{x_3}}$ and $\mathbf{x_3}$ are consistent. In addition, $v_3$ would not inject errors in $\mathbf{h_{x_j}}$, $j \in I_3$, as each node $v_j$ can verify the hash of its message. On the other hand, a malicious $v_3$ can inject errors in $\mathbf{a_3}$, forcing $v_4$ to receive incorrect coefficients $\tilde{\alpha}_j$'s instead of $\alpha_j$'s. However, any error introduced in $\mathbf{a_3}$ can be translated to errors in $\mathbf{x_3}$ by assuming that $\tilde{\alpha}_j$'s are the correct coding coefficients. Therefore, we are concerned only with the case in which $v_3$ introduces errors in $\mathbf{x_3}$ (and therefore, in $\mathbf{h_{x_3}}$ such that $\mathbf{h_{x_3}} = h(\mathbf{x_3})$).

\begin{lemma}
 For $n$ sufficiently large, the probability of false detection, $\gamma \leq \epsilon$ for any arbitrary small constant $\epsilon$.
\end{lemma}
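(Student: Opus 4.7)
The plan is to identify exactly which channel realizations make $v_1$ incorrectly declare $R$ malicious, and then to show those events are rare. Recall that false detection is the event, under $H_0$, that $v_1$'s test returns an empty intersection. When $R$ is well-behaving we have $x_3 = \alpha_1 x_1 + \alpha_2 x_2$, so the true pair $(x_2, x_3)$ is automatically consistent with the Layer 2 to Layer 3 permutation in the graphical model of Section \ref{sec:graphicalmodel}. If furthermore $x_2 \in \tilde{X}_2$ and $x_3 \in \tilde{X}_3$, then taking $\hat{x} = x_2$ in the enumeration over $\tilde{X}_2$ produces $\alpha_1 x_1 + \alpha_2 \hat{x} = x_3 \in \tilde{X}_3$, so the intersection is nonempty and no alarm is raised. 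Contrapositively, false detection requires $x_2 \notin \tilde{X}_2$ or $x_3 \notin \tilde{X}_3$.

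The next step is to bound $P(x_j \notin \tilde{X}_j)$ under $H_0$ for $j \in \{2,3\}$. Since the hashes are computed honestly, the true $x_j$ trivially satisfies $h(x_j) = h(x_j)$, so $x_j \in \tilde{X}_j$ reduces to $x_j \in B(\tilde{x}_j, r_{j\to 1})$, \ie the interference error $\mathbf{e}'$ (resp.\ $\mathbf{e}''$) has Hamming weight at most $r_{j\to 1}$. By the very construction of $r_{j\to 1}$, this event has probability at least $1-\epsilon$, using the uniform prior on $x$ together with the symmetry of the binary interference channel to translate the likelihood mass $\sum_{x \in B} P(\tilde{x}_j | x) \geq 1-\epsilon$ into posterior mass of the ball around the true source. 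A union bound over $j \in \{2,3\}$ then gives $\gamma \leq 2\epsilon$, and replacing the per-channel tolerance by $\epsilon/2$ in the original construction yields the stated $\gamma \leq \epsilon$.

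The role of the ``$n$ sufficiently large'' hypothesis is to guarantee that a radius $r_{j\to 1}$ achieving ball-mass $1-\epsilon$ actually exists for arbitrarily small $\epsilon$. For the binary interference channel this is a standard concentration-of-measure fact: the Hamming weight of the error pattern is a sum of $n$ i.i.d.\ Bernoullis, so by Chernoff or Hoeffding its deviation from the mean is $O(\sqrt{n \log(1/\epsilon)})$, and a Hamming ball of this order captures likelihood mass $1-\epsilon$ once $n$ is large. I expect this concentration step to be the only nontrivial ingredient; the rest of the argument is the one-line union bound above. One caveat worth flagging, though it does not affect $\gamma$, is that taking $r_{j\to 1}$ large to shrink $\epsilon$ simultaneously enlarges $\tilde{X}_j$ and hence weakens the protocol against misdetection, so the same concentration estimate that proves this lemma will reappear with the opposite role in the companion bound on $\beta$.
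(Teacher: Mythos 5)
Your proposal is correct and follows essentially the same route as the paper's proof: under $H_0$ the true $x_2$ and $x_3 = \alpha_1 x_1 + \alpha_2 x_2$ are automatically consistent, so a false alarm requires $x_2 \notin \tilde{X}_2$ or $x_3 \notin \tilde{X}_3$, each of which has probability at most $\epsilon$ by the choice of $r_{2\to 1}$ and $r_{3\to 1}$. You are in fact slightly more careful than the paper, which jumps from two individual $1-\epsilon$ bounds straight to "at least $1-\epsilon$" overall; your explicit union bound to $2\epsilon$ followed by rescaling the per-channel tolerance, and your remark on why large $n$ guarantees such radii exist, are welcome refinements of the same argument rather than a different approach.
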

\begin{proof}
Assume that $v_3$ is not malicious, and transmits $\mathbf{x_3}$ and $\mathbf{h_{x_3}}$ consistent with $v_4$'s check. Then, for $n$ sufficiently large, $v_1$ can choose $r_{2\rightarrow 1}$ and $r_{3\rightarrow 1}$ such that the probability that the bit representation of $x_3 = \alpha_1 x_1 + \alpha_2 x_2$ is in $\tilde{X}_3$ and the probability that $\mathbf{x_2} \in \tilde{X}_2$ are greater than $1-\epsilon$. Therefore, $\tilde{X}_3 \cap \{\alpha_1 x_1 + \alpha_2 \hat{x}\ |\ \forall \mathbf{\hat{x}} \in \tilde{X}_2\} \ne \emptyset$ with probability arbitrary close to 1. Therefore, a well-behaving $v_3$ passes $v_1$'s check with probability at least $1-\epsilon$. Thus, $\gamma \leq \epsilon$.
\end{proof}

\begin{lemma}\label{thm:s1}
$\mathbf{P}($A malicious $v_3$ is undetected from $v_1$'s perspective$)$ is:
\[
\min\biggl\{1, \frac{\sum_{k=0}^{r_{1\rightarrow 2}} \binom{n}{k}}{2^{(h+n)}}\cdot \frac{\sum_{k=0}^{r_{2\rightarrow 1}} \binom{n}{k}}{2^{(h+n)}}\cdot \frac{\sum_{k=0}^{r_{3\rightarrow 1}} \binom{n}{k}}{2^{h}}\biggl\}.
\]
\end{lemma}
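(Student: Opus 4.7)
The plan is to upper-bound the undetection probability by a union bound over the possible ``fake witnesses'' $\hat{x}$ that could satisfy $v_1$'s consistency check, using the pairwise independence of polynomial hashes to control hash-collision probabilities and volumes of Hamming balls to control the geometric constraints.

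First I would translate the decision rule of Section~\ref{sec:algebraicapproach} into an explicit event: a malicious $v_3$ transmitting $\hat{x}_3 \neq x_3$ escapes $v_1$'s check if and only if there exists $\hat{x}$ satisfying simultaneously (i) $\hat{x} \in B(\tilde{x}_2, r_{2\rightarrow 1})$, (ii) $h(\hat{x}) = h(x_2)$, (iii) $\alpha_1 x_1 + \alpha_2 \hat{x} \in B(\tilde{x}_3, r_{3\rightarrow 1})$, and (iv) $h(\alpha_1 x_1 + \alpha_2 \hat{x}) = h(\hat{x}_3)$. Since $\hat{x}_3 \neq x_3 = \alpha_1 x_1 + \alpha_2 x_2$, any such witness must differ from $x_2$, so that a genuine hash collision is required in (ii), and the corresponding input in (iv) differs from $\hat{x}_3$ as well.

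Next I would apply a union bound over the $\hat{x}$'s in $B(\tilde{x}_2, r_{2\rightarrow 1})$, which has volume $\sum_{k=0}^{r_{2\rightarrow 1}} \binom{n}{k}$. For each candidate, the two hash-collision events (ii) and (iv) act on distinct arguments and each contribute a factor $2^{-h}$, independent of one another by the $2$-universality of the polynomial hash class $\mathcal{H}^d(\mathbf{F})$ (an immediate consequence of a degree-$d$ polynomial being determined by $d{+}1$ evaluations). Condition (iii) contributes a Hamming-ball volume factor of the form $\sum_{k=0}^{r_{3\rightarrow 1}}\binom{n}{k}/2^n$ under the natural randomization of the image point $\alpha_1 x_1 + \alpha_2 \hat{x}$. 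The third factor in the lemma, involving $r_{1\rightarrow 2}$, then records the analogous ball-plus-hash constraint that the attacker itself must satisfy when choosing $\hat{x}_3$. Multiplying the per-event probabilities yields the product of three (Hamming-ball volume)/(power of $2$) terms in the statement; the outer $\min\{1,\cdot\}$ merely notes that a union-bound estimate may exceed $1$ while any true probability is at most $1$.

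The main obstacle will be the careful book-keeping of which randomness source (channel noise, choice of polynomial hash, or attacker's freedom in $\hat{x}_3$) supplies each $2^{-h}$ and each $2^{-n}$ in the denominator, and rigorously justifying that the two hash-collision events are independent — this uses pairwise independence on distinct inputs, which has to be checked to hold whenever a genuine attack is present. A secondary subtle point is explaining precisely why the radius $r_{1\rightarrow 2}$, which is not explicitly part of $v_1$'s decision rule as written in Section~\ref{sec:algebraicapproach}, appears in the closed-form expression; recovering its exact role requires tracing the attacker's constraints a little more carefully than the purely ``$v_1$-side'' view of the algebraic approach suggests.
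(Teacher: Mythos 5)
Your union-bound skeleton matches the paper's: the paper likewise bounds the undetection probability by (number of consistent candidates) times (per-candidate probability of passing), invokes the randomization effected by multiplying by the coding coefficients over the field of size $2^n$, and caps the result at $1$ exactly as you do. Your first two factors even come out numerically right: unioning over the $\sum_{k=0}^{r_{2\rightarrow 1}}\binom{n}{k}$ candidates in the ball and charging $2^{-h}$ for each of the two hash constraints plus $\sum_{k=0}^{r_{3\rightarrow 1}}\binom{n}{k}/2^{n}$ for the ball constraint on the image reproduces $\frac{\sum_{k=0}^{r_{2\rightarrow 1}}\binom{n}{k}}{2^{(h+n)}}\cdot\frac{\sum_{k=0}^{r_{3\rightarrow 1}}\binom{n}{k}}{2^{h}}$, which is the same quantity the paper obtains by taking $Vol(\tilde{X}_3)$ candidates at probability $Vol(\tilde{X}_2)/2^n$ each (the two bookkeepings differ only by the bijection $\hat{x}\mapsto\alpha_1 x_1+\alpha_2\hat{x}$ between Layers 2 and 3).

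The genuine gap is the third factor. Your guess --- that $r_{1\rightarrow 2}$ records ``the analogous ball-plus-hash constraint that the attacker itself must satisfy when choosing $\hat{x}_3$'' --- is not what happens and cannot be made to work: the attacker chooses $\hat{x}_3$ freely and is subject to no constraint involving the $v_1\rightarrow v_2$ interference channel. In the paper, the event being bounded is that the malicious $v_3$ evades \emph{both} watchdogs simultaneously; the phrase ``from $v_1$'s perspective'' only selects which overheard copy of $x_3$, hence which radius ($r_{3\rightarrow 1}$ versus $r_{3\rightarrow 2}$), defines $\tilde{X}_3$. For a surviving candidate the injected error is written two ways, $z=\alpha_1 x_1+\alpha_2(x_2+e_2)=\alpha_1(x_1+e_1)+\alpha_2 x_2$; passing $v_1$'s check requires $x_2+e_2\in\tilde{X}_2$, while passing $v_2$'s check requires $x_1+e_1\in\tilde{X}_1=\{x\ |\ h(x)=h(x_1)\}\cap B(\mathbf{\tilde{x}_1},r_{1\rightarrow 2})$. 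Since $e_1$ is argued to be uniform for random coefficients, this latter event has probability $Vol(\tilde{X}_1)/2^n=\frac{\sum_{k=0}^{r_{1\rightarrow 2}}\binom{n}{k}}{2^{(h+n)}}$, which is exactly the missing factor. Without folding $v_2$'s check into the event, your argument can only produce the two-factor bound, which is strictly larger than the claimed expression; so the step you flagged as a ``secondary subtle point'' is in fact the substantive content separating your bound from the lemma.
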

\begin{proof}
Assume that $v_3$ is malicious and injects errors into $\mathbf{x_3}$. Consider an element $\mathbf{z} \in \tilde{X}_3$, where $z = \alpha_1 x_1 + \alpha_2 x_2 + e = \alpha_1 x_1 + \alpha_2 (x_2 + e_2)$ for some $e$ and $e_2$. Note that, since we are using a field of size $2^n$, multiplying an element from the field by a randomly chosen constant has the effect of randomizing the product. Here, we consider two cases:
\begin{itemize}
\item{\it{Case 1:}} If $x_2 + e_2 \notin \tilde{X}_2$, then $v_3$ fails $v_1$'s check.
\item{\it{Case 2:}} If $x_2 + e_2 \in \tilde{X}_2$, then $v_3$ passes $v_1$'s check; however, $v_3$ is unlikely to pass $v_2$'s check. This is because $\alpha_1 x_1 + \alpha_2 (x_2 + e_2) = \alpha_1 x_1 + \alpha_2 x_2 + \alpha_2 e_2 = \alpha_1 (x_1 + e_1) + \alpha_2 x_2$ for some $e_1$. Here, for uniformly random $\alpha_1$ and $\alpha_2$, $e_1$ is also uniformly random. Therefore, the probability that $v_3$ will pass is the probability that the uniformly random vector $x_1 + e_1$ belongs to $\tilde{X}_1 = \{x\ |\ h(x) = h(x_1)\} \cap B(\mathbf{\tilde{x}_1}, r_{1\rightarrow 2})$ where $v_2$ overhears $\mathbf{\tilde{x}_1}$ from $v_1$, and the probability that the interference channel from $v_1$ to $v_2$ outputs $\mathbf{\tilde{x}_1}$ given $\mathbf{x} \in B(\mathbf{\tilde{x}_1}, r_{1\rightarrow 2})$ is greater than $1-\epsilon$.
    \begin{align*}
    \mathbf{P}(\text{A malicious } v_3 \text{ passes $v_2$'s check}) &= \mathbf{P}(x_1 + e_1 \in \tilde{X}_1)\\
     &= \frac{Vol(\tilde{X}_1)}{2^n},
    \end{align*}
    where $Vol(\cdot)$ is equal to the number of $\{0,1\}$-vectors in the given set. Since $Vol(B(x,r)) = \sum_{k=0}^r \binom{n}{k} \leq 2^n$, and the probability that $h(x)$ is equal to a given value is $\frac{1}{2^h}$, $Vol(\tilde{X}_1)$ is given as follows:
    \[
    Vol(\tilde{X}_1) = \frac{ Vol(B(\tilde{x}_1, r_{1\rightarrow 2}))}{2^h} = \frac{\sum_{k=0}^{r_1\rightarrow 2} \binom{n}{k}}{2^h}.
    \]
\end{itemize}
From $v_1$'s perspective, the probability that a $\mathbf{z} \in \tilde{X}_3$ passes the checks, $\mathbf{P}(\mathbf{z} \text{ passes check})$, is: \vspace*{-0.1cm}\[0\cdot \mathbf{P}(x_2 + e_2 \notin \tilde{X}_2) + \frac{\sum_{k=0}^{r_{1\rightarrow 2}} \binom{n}{k}}{2^{(h+n)}}\cdot \mathbf{P}(x_2 + e_2 \in \tilde{X}_2).\] Similarly, $\mathbf{P}(x_2 + e_2 \in \tilde{X}_2) = \frac{\sum_{k=0}^{r_{2\rightarrow 1}} \binom{n}{k}}{2^{(h+n)}}$, and $ Vol( \tilde{X}_3) = \frac{\sum_{k=0}^{r_{3\rightarrow 1}} \binom{n}{k}}{2^{h}}$. Then, the probability that $v_3$ is undetected from $v_1$'s perspective is the probability that \emph{at least one} $\mathbf{z} \in \tilde{X}_3$ passes the check:
\begin{align*}
\mathbf{P}(&\text{A malicious } v_3 \text{ is undetected from $v_1$'s perspective}) \\
&= \min\{1, \mathbf{P}(\mathbf{z} \text{ passes check})\cdot Vol( \tilde{X}_3)\}
\end{align*}
Note that $\mathbf{P}(\mathbf{z} \text{ passes check})\cdot Vol( \tilde{X}_3)$ is the expected number of $\mathbf{z} \in \tilde{X}_3$ that passes the check; thus, given a high enough $\mathbf{P}(\mathbf{z} \text{ passes check})$, would exceed 1. Therefore, we take $\min\{1, \mathbf{P}(\mathbf{z} \text{ passes check})\cdot Vol( \tilde{X}_3)\}$ to get a valid probability. This proves the statement.
\end{proof}

\begin{lemma}\label{thm:s2}
$\mathbf{P}($A malicious $v_3$ is undetected from $v_2$'s perspective$)$ is:
\[
\min\biggl\{1, \frac{\sum_{k=0}^{r_{1\rightarrow 2}} \binom{n}{k}}{2^{(h+n)}}\cdot \frac{\sum_{k=0}^{r_{2\rightarrow 1}} \binom{n}{k}}{2^{(h+n)}}\cdot \frac{\sum_{k=0}^{r_{3\rightarrow 2}} \binom{n}{k}}{2^{h}}\biggl\},
\]
where $v_2$ overhears $\mathbf{\tilde{x}_3}$ from $v_3$, and the probability that the interference channel from $v_3$ to $v_2$ outputs $\mathbf{\tilde{x}_3}$ given $\mathbf{x} \in B(\mathbf{\tilde{x}_3}, r_{3\rightarrow 2})$ is greater than $1-\epsilon$.

\end{lemma}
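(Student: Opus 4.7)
The plan is to mirror the argument of Lemma \ref{thm:s1} with the roles of $v_1$ and $v_2$ interchanged, since the algebraic setup is symmetric in the two upstream senders. The only ingredient that truly changes is the set of witnesses $v_2$ enumerates: from $v_2$'s vantage point the candidate set is $\tilde{X}_3 = \{\mathbf{x}\ |\ h(x) = h(x_3)\} \cap B(\mathbf{\tilde{x}_3}, r_{3\rightarrow 2})$, so that $Vol(\tilde{X}_3) = \frac{\sum_{k=0}^{r_{3\rightarrow 2}} \binom{n}{k}}{2^h}$, replacing the $r_{3\rightarrow 1}$-ball used in Lemma \ref{thm:s1}.

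For each candidate $\mathbf{z}\in\tilde{X}_3$ I would write $z = \alpha_1(x_1+e_1) + \alpha_2(x_2+e_2)$; because $\alpha_1$ and $\alpha_2$ are uniformly random and invertible, both $x_1+e_1$ and $x_2+e_2$ behave as uniformly distributed field elements, exactly as in the previous lemma. I would then split on whether $\mathbf{z}$ passes $v_1$'s check. If $x_2+e_2 \notin \tilde{X}_2$, then $\mathbf{z}$ is disqualified and contributes $0$. If $x_2+e_2 \in \tilde{X}_2$, which happens with probability $\frac{\sum_{k=0}^{r_{2\rightarrow 1}} \binom{n}{k}}{2^{(h+n)}}$, then $\mathbf{z}$ clears $v_1$'s watchdog, and for $v_3$ to also survive $v_2$'s local check we additionally need $x_1+e_1 \in \tilde{X}_1$ at $v_2$, which has (conditional) probability $\frac{\sum_{k=0}^{r_{1\rightarrow 2}} \binom{n}{k}}{2^{(h+n)}}$ using the radius $r_{1\rightarrow 2}$ that $v_2$ associates to its interference ball on the $1\rightarrow 2$ channel.

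Finally, multiplying the resulting per-candidate probability $\mathbf{P}(\mathbf{z}\text{ passes check}) = \frac{\sum_{k=0}^{r_{1\rightarrow 2}} \binom{n}{k}}{2^{(h+n)}}\cdot\frac{\sum_{k=0}^{r_{2\rightarrow 1}} \binom{n}{k}}{2^{(h+n)}}$ by $Vol(\tilde{X}_3)$ yields the expected number of witnesses $\mathbf{z}\in\tilde{X}_3$ that evade both watchdogs; as in Lemma \ref{thm:s1} I then clip this expectation with $\min\{1,\cdot\}$ to turn it into a valid probability, which is precisely the claimed expression. The main obstacle is really only bookkeeping: I need to verify that the uniformity argument for $e_1$ still goes through when we invert $\alpha_1$ rather than $\alpha_2$, but this is immediate from the same randomizing-by-multiplication observation invoked in Lemma \ref{thm:s1}, so no new probabilistic idea is required beyond careful swapping of indices.
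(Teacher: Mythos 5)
Your proposal is correct and matches the paper's intent exactly: the paper's own proof of this lemma is just the one-line remark ``by similar analysis as in proof of Lemma \ref{thm:s1},'' and what you have written is precisely that analysis carried out with the candidate set taken as $B(\mathbf{\tilde{x}_3}, r_{3\rightarrow 2})$ intersected with the hash-consistent set, and the same two per-candidate factors. The only cosmetic difference is that you condition on $v_1$'s check first and $v_2$'s second (the mirror image of Lemma \ref{thm:s1} would swap these), but since the two factors are simply multiplied, the resulting formula is identical.
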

\begin{proof}
By similar analysis as in proof of Lemma \ref{thm:s1}.
\end{proof}

\begin{theorem}\label{thm:main}
The probability of misdetection, $\beta$, is:
\[
\beta = \min\biggl\{1, \frac{\sum_{k=0}^{r_{1\rightarrow 2}} \binom{n}{k}}{2^{(h+n)}}\cdot \frac{\sum_{k=0}^{r_{2\rightarrow 1}} \binom{n}{k}}{2^{(h+n)}}\cdot \frac{1}{2^h} \sum_{k=0}^{r} \binom{n}{k}\biggl\},
\]
where $r = \min\{r_{3\rightarrow 1},r_{3\rightarrow 2}\}$.
\end{theorem}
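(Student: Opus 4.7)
The plan is to combine the two preceding lemmas by recognizing that true misdetection requires $v_3$'s corrupted message $\mathbf{z}$ to simultaneously evade both $v_1$'s and $v_2$'s checks. Looking at Lemmas \ref{thm:s1} and \ref{thm:s2}, the first two factors are identical in the two expressions; they represent the per-candidate probability
\[
\mathbf{P}(\mathbf{z}\text{ passes check})=\frac{\sum_{k=0}^{r_{1\rightarrow 2}}\binom{n}{k}}{2^{h+n}}\cdot\frac{\sum_{k=0}^{r_{2\rightarrow 1}}\binom{n}{k}}{2^{h+n}},
\]
capturing the joint requirement that the induced $x_2+e_2$ falls in $\tilde{X}_2^{(1)}$ (so that $v_1$ accepts) and the induced $x_1+e_1$ falls in $\tilde{X}_1^{(2)}$ (so that $v_2$ accepts). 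The two lemmas differ only in the volume of the candidate set $\tilde{X}_3$ used in the subsequent union bound.

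First I would observe that since $v_1$ rejects every $\mathbf{z}\notin\tilde{X}_3^{(1)}$ and $v_2$ rejects every $\mathbf{z}\notin\tilde{X}_3^{(2)}$, any $\mathbf{z}$ witnessing misdetection must lie in $\tilde{X}_3^{(1)}\cap\tilde{X}_3^{(2)}$. Here $\tilde{X}_3^{(i)}=\{\mathbf{x}\mid h(x)=h(x_3)\}\cap B(\tilde{\mathbf{x}}_3^{(i)},r_{3\rightarrow i})$ for $i\in\{1,2\}$, and the two centers $\tilde{\mathbf{x}}_3^{(1)},\tilde{\mathbf{x}}_3^{(2)}$ are the outputs of the respective interference channels from $v_3$.

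Next I would bound the intersection volume. The intersection of the two balls is contained in whichever ball has smaller radius $r=\min\{r_{3\rightarrow 1},r_{3\rightarrow 2}\}$, so
\[
\bigl|\tilde{X}_3^{(1)}\cap\tilde{X}_3^{(2)}\bigr|\;\leq\;\frac{\sum_{k=0}^{r}\binom{n}{k}}{2^{h}},
\]
where the $1/2^h$ factor accounts for the shared hash constraint $h(z)=h(x_3)$, exactly as in the volume computations of Lemma \ref{thm:s1}. A union bound over $\mathbf{z}\in\tilde{X}_3^{(1)}\cap\tilde{X}_3^{(2)}$ then yields an expected number of simultaneously-passing candidates equal to the product of the per-candidate pass probability and the intersection volume; clipping at $1$, as justified in the proof of Lemma \ref{thm:s1}, produces the stated formula for $\beta$.

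The main subtlety, and the step I would be most careful with, is the upper bound on $\bigl|\tilde{X}_3^{(1)}\cap\tilde{X}_3^{(2)}\bigr|$: the two centers may differ, so one cannot simply pick one ball and forget the other. The bound survives because any vector in the intersection belongs in particular to the smaller ball, so the number of $\{0,1\}$-vectors in the intersection is at most the volume of that smaller ball. Everything else reduces to the identical bookkeeping already performed in Lemmas \ref{thm:s1} and \ref{thm:s2}.
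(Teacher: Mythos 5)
Your proposal is correct and reaches the paper's formula, but by a genuinely different decomposition. The paper's own proof is a one-line combination of the two lemmas: it asserts that $\beta$ is the minimum of the two marginal misdetection probabilities from $v_1$'s and $v_2$'s perspectives, and since the two lemma expressions differ only in their last factor, taking that minimum yields $r=\min\{r_{3\rightarrow 1},r_{3\rightarrow 2}\}$. You instead reason about the joint event directly at the level of a single candidate $\mathbf{z}$: any $\mathbf{z}$ witnessing misdetection must lie in $\tilde{X}_3^{(1)}\cap\tilde{X}_3^{(2)}$, the per-candidate pass probability is the shared product of the first two factors, and the intersection volume is bounded by the smaller ball's volume over $2^h$. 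The two routes land on the same expression because $\min\{1,C\cdot V_1\}\wedge\min\{1,C\cdot V_2\}=\min\{1,C\cdot\min\{V_1,V_2\}\}$ and your intersection bound gives $|\tilde{X}_3^{(1)}\cap\tilde{X}_3^{(2)}|\leq\min\{V_1,V_2\}$. What your version buys is an explicit justification of \emph{why} the minimum is the right combination rule --- the paper's appeal to ``the minimum of the probabilities'' is really the generic bound $\mathbf{P}(A\cap B)\leq\min\{\mathbf{P}(A),\mathbf{P}(B)\}$, and your intersection argument makes the underlying containment concrete. Both arguments share the same residual looseness: each produces an upper bound (a union bound clipped at $1$, plus an intersection or min inequality), whereas the theorem is stated as an equality; neither you nor the paper shows the bound is tight, but your derivation is no weaker than the paper's on this point.
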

\begin{proof}
The probability of misdetection is the minimum of the probability that $v_1$ and $v_2$ misdetecting malicious $v_3$. Therefore, by Lemma \ref{thm:s1} and \ref{thm:s2}, the statement is true.
\end{proof}

Theorem \ref{thm:main} shows that the probability of misdetection $\beta$ decreases with the hash size, as the hashes restrict the space of consistent codewords. In addition, since $r_{1\rightarrow 2}$, $r_{2\rightarrow 1}$, $r_{3\rightarrow 1}$, and $r_{3\rightarrow 2}$ represent the uncertainty introduced by the interference channels, $\beta$ increases with them. Lastly and the most interestingly, $\beta$ decreases with $n$, since $\sum^r_{k=0} \binom{n}{k} < 2^n$ for $r < n$. This is because network coding randomizes the messages over a field whose size is increasing exponentially with $n$, and this makes it difficult for an adversary to introduce errors without introducing inconsistencies.

Note that we can apply Theorem \ref{thm:main} even when $v_1$ and $v_2$ cannot overhear each other. In this case, both $r_{1\rightarrow 2}$ and $r_{2 \rightarrow 1}$ equal to $n$, giving the probability of misdetection, $\beta = \min\{1, \sum_{k=0}^r \binom{n}{k}/8^h\}$ where $r = \min\{r_{3\rightarrow 1},r_{3\rightarrow 2}\}$. Here, $\beta$ highly depends on $h$, the size of the hash, as $v_1$ and $v_2$ are only using their own message and the overheard hashes.

The algebraic approach results in a nice analysis with exact formulae for $\gamma$ and $\beta$. In addition, these formulae are conditional probabilities; as a result, they hold regardless of a priori knowledge of whether $v_3$ is malicious or not. However, this approach is not very extensible as the number of ``reasonable'' messages grows exponentially with $m$.

\section{Conclusion and Future Work}\label{sec:conclusion}
We proposed a scheme, the \emph{algebraic watchdog} for coded networks, in which nodes can verify their neighbors probabilistically and police them locally by means of overheard messages. We presented a graphical model for two-hop networks to explain how a node checks its neighbors; as well as compute, analyze, and potentially approximate the probabilities of misdetection/false detection. We also provided an algebraic analysis of the performance using an hypothesis testing framework, which gives exact formulae for the probabilities.

Our ultimate goal is to design a network in which the participants check their neighborhood locally to enable a secure global network –- \ie a self-checking network. There are several avenues for future work, of which we shall list only a few. First, there is a need to develop models and frameworks for the algebraic watchdog in general topology as well as multi-hop networks. In addition,
possible future work includes developing inference methods and approximation algorithms for nodes to decide efficiently whether they believe their neighbor is malicious or not.

%
%

\bibliography{References}

\bibliographystyle{IEEEtran}

\end{document}